\documentclass{article}
\usepackage[latin9]{inputenc}
\usepackage{float}
\usepackage{multirow}
\usepackage{amsmath}
\usepackage{amsthm}
\usepackage{amssymb}
\usepackage[]
 {hyperref}

\makeatletter

\providecommand{\tabularnewline}{\\}

\newcommand{\lyxaddress}[1]{
	\par {\raggedright #1
	\vspace{1.4em}
	\noindent\par}
}
\theoremstyle{plain}
\newtheorem{thm}{\protect\theoremname}
\theoremstyle{plain}
\newtheorem{prop}[thm]{\protect\propositionname}
\theoremstyle{remark}
\newtheorem{rem}[thm]{\protect\remarkname}
\theoremstyle{plain}
\newtheorem{cor}[thm]{\protect\corollaryname}

\makeatother

\providecommand{\corollaryname}{Corollary}
\providecommand{\propositionname}{Proposition}
\providecommand{\remarkname}{Remark}
\providecommand{\theoremname}{Theorem}

\begin{document}
\title{New Equivalence Tests for Approximate Independence in Contingency
Tables}
\author{Vladimir Ostrovski}
\maketitle

\lyxaddress{Ampega Asset Management GmbH, Charles-de-Gaulle-Platz 1, 50679 Cologne,
Germany\\
Email: vladimir.ostrovski77@googlemail.com}
\begin{abstract}
We introduce new equivalence tests for approximate independence in
two-way contingency tables. The critical values are calculated asymptotically.
The finite sample performance of the tests is improved by means of
the bootstrap. An estimator of boundary points is developed to make
the bootstrap based tests statistically efficient and computationally
feasible. We compare the performance of the proposed tests for different
table sizes by simulation. Then we apply the tests to real data sets.
The tests are implemented in R and available online, see \href{https://github.com/TestingEquivalence/EquivalenceTestIndependenceR}{https://github.com/TestingEquivalence/EquivalenceTestIndependenceR}.
\end{abstract}
Keywords: testing; approximate independence; contingency tables; bootstrap;
equivalence; minimum distance; boundary point estimator\\
MSC: 62F03; 62G10 

\section{Introduction}

Testing for approximate row-column independence in two-way contingency
tables is a common task in statistical practice. The first publications
on this topic go back to \cite{HodgesLehmann1954}, \cite{DiaconisEfron1985}.
More recently, \cite{LiuLindsay2009} applied the semi-parametric
tubular tolerance regions to the row-column independence model in
two-way contingency tables. The method relies on the analytical properties
of the LRT statistic to obtain a closed form estimator of boundary
points. \cite{Wellek2010} develops a test for independence in multi-way
contingency tables in Section 9.2. For this purpose, he applies a
test for consistency with a fully specified multinomial distribution
as follows. First, the marginal distributions of the contingency table
are calculated. The test statistic is the Euclidean distance between
the product measure of the marginal distributions and the contingency
table. The critical value is calculated asymptotically.

\cite{Ostrovski2018} proposes a general method to test equivalence
to families of multinomial distributions, which is based on the minimum
distance 
\begin{equation}
d\left(p,\mathcal{M}\right)=\inf_{q\in\mathcal{M}}d\left(p,q\right)\label{min_dist}
\end{equation}
to a family $\mathcal{M}$ of multinomial distributions. If $d$ is
Euclidean distance and $\mathcal{M}$ is the independence model then
the calculation of minimum distance \eqref{min_dist} requires numerical
optimization. Generally, the method relies on the existence of a continuous
minimizer of \eqref{min_dist}. Unfortunately, it could not be shown
if a continuous minimizer exists for the independence model. Instead,
\cite{Ostrovski2018} assumes the existence of a continuous minimizer
at all points and then applies the method to test for approximate
independence. Additionally, numerical calculation of minimum distance
\eqref{min_dist} makes the bootstrap test computationally intensive.

We follow the lines of \cite{Ostrovski2018}, but avoid the numerical
valuation of the minimum distance \eqref{min_dist} in the special
case of independence testing. We also propose an efficient bootstrap
test, which is based on the randomized estimator of the boundary points.

Any two-way contingency table of the size $k_{1}\times k_{2}$ corresponds
to a probability matrix from $\mathbb{R}^{k_{1}}\times\mathbb{R}^{k_{2}}$.
Let $p=\left(p_{ij}\right)$ denote the probability matrix. Let $\mathcal{M}$
be the independence model, which contains all product measures of
the corresponding dimensions. The approximate row-column independence
can be shown by testing 
\begin{equation}
H_{0}=\left\{ d\left(p,\mathcal{M}\right)\geq\varepsilon\right\} \:\mbox{against}\:H_{1}=\left\{ d\left(p,\mathcal{M}\right)<\varepsilon\right\} ,\label{test_formal}
\end{equation}
where $\varepsilon>0$ is a tolerance parameter.

Let $r$ and $c$ denote the probability vectors of the marginal distributions,
which are defined by $r_{i}=\sum_{j=1}^{k_{2}}p_{ij}$ and $c_{j}=\sum_{i=1}^{k_{1}}p_{ij}$.
A probability matrix $p$ belongs to $\mathcal{M}$ iff the equality
$p_{ij}=r_{i}c_{j}$ is fulfilled for all $p_{ij}$. We consider the
transformations $h_{a}$ and $h_{r}$ of the matrix $p$, which are
defined by $h_{a}\left(p\right)=\left(p_{ij}-r_{i}c_{j}\right)$ and
$h_{r}\left(p\right)=\left(\frac{p_{ij}}{r_{i}c_{j}}\right).$

For any differentiable distance $l$ on $\mathbb{R}^{k_{1}}\times\mathbb{R}^{k_{2}}$
we define two new distances $d_{a}\left(p,q\right)=l\left(h_{a}\left(p\right),h_{a}\left(q\right)\right)$
and $d_{r}\left(p,q\right)=l\left(h_{r}\left(p\right),h_{r}\left(q\right)\right)$.
It should be noted that $d_{a}$ and $d_{r}$ are only pseudo-metrics
because $d_{r}\left(p,q\right)=0$ or $d_{a}\left(p,q\right)=0$ does
not imply $p=q$. We put these distances in Equation (\ref{min_dist})
and obtain 
\[
d_{a}\left(p,\mathcal{M}\right)=\inf_{q\in\mathcal{M}}d_{a}\left(p,q\right)=\inf_{q\in\mathcal{M}}l\left(h_{a}\left(p\right),0\right)=l\left(h_{a}\left(p\right),0\right)
\]
and 
\[
d_{r}\left(p,\mathcal{M}\right)=l\left(h_{r}\left(p\right),1\right),
\]
where $0$ denotes the zero matrix and $1$ is the matrix of ones.
The distances $d_{a}\left(p,\mathcal{M}\right)$ and $d_{r}\left(p,\mathcal{M}\right)$
can be interpreted respectively as the absolute deviation and the
relative deviation between $p$ and the product measure of the marginal
distributions. The distances $d_{a}\left(p,\mathcal{M}\right)$ and
$d_{r}\left(p,\mathcal{M}\right)$ are easy to calculate without optimization.

Therefore, $d_{a}$ and $d_{r}$ are good candidates for the general
distance $d$ in \eqref{min_dist} and we will use only these two
specific distances in remainder of the paper. 

We observe a contingency table $p_{n}$ of relative frequencies, where
$n$ is the sample size and $p$ is the true underlying probability
matrix. Then the test statistic for \eqref{test_formal} is 
\[
T_{a}\left(p_{n}\right)=\sqrt{n}\left(d_{a}\left(p_{n},\mathcal{M}\right)-\varepsilon\right)
\]
 or 
\[
T_{r}\left(p_{n}\right)=\sqrt{n}\left(d_{r}\left(p_{n},\mathcal{M}\right)-\varepsilon\right)
\]
depending on user preference. Below we write $d_{*}$ instead of $d_{a}$
and $d_{r}$ if the statements are correct for both distances. We
use the subscript $*$ instead of $a$ and $r$, if appropriate.

\section{Asymptotic Tests}

In this section, we derive the asymptotic distribution of the test
statistic and give a detailed description of the asymptotic test.

Let $v:\mathbb{R}^{k_{1}}\times\mathbb{R}^{k_{2}}\rightarrow\mathbb{R}^{k_{1}+k_{2}}$
be the usual bijection $v(p)=\left(p_{11},p_{12},\ldots,p_{k_{1}k_{2}}\right)$.
Let $\mathring{d}_{*}$ denote the derivative of the function $q\mapsto d_{*}\left(v^{-1}\left(q\right),\mathcal{M}\right)$,
which can be easily calculated using the chain rule. 
\begin{prop}
Let $p$ be a boundary point of $H_{0}$ and $q=v\left(p\right)$.
Let $D_{q}$ denote a square diagonal matrix, whose diagonal entries
are $q_{1},\ldots,q_{k_{1}+k_{2}}$. Then the asymptotic distribution
of $T_{*}\left(p_{n}\right)$ is Gaussian with mean zero and variance
$\sigma_{*}\left(p\right)=\mathring{d}_{*}\left(q\right)\Sigma\left(q\right)\mathring{d}_{*}\left(q\right)^{t}$,
where $\Sigma\left(q\right)=D_{q}-qq^{t}$ is a covariance matrix. 
\end{prop}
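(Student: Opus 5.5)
The plan is the standard multinomial central limit theorem combined with the delta method. Write $q=v(p)$ and $q_{n}=v(p_{n})$, with $p_{n}$ the table of relative frequencies. Since $nq_{n}$ is multinomial with parameters $n$ and $q$, the multivariate central limit theorem gives
\[
\sqrt{n}\left(q_{n}-q\right)\xrightarrow{d}N\left(0,\Sigma\left(q\right)\right),\qquad\Sigma\left(q\right)=D_{q}-qq^{t}.
\]
This holds because a single multinomial draw has covariance $D_{q}-qq^{t}$. The limit is degenerate, since $\Sigma(q)$ is singular along the direction of the vector of ones, but this causes no difficulty for the delta method.

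Next we use that $p$ is a boundary point of $H_{0}$. By continuity of $p\mapsto d_{*}(p,\mathcal{M})$ this gives $d_{*}(p,\mathcal{M})=\varepsilon$, and hence
\[
T_{*}\left(p_{n}\right)=\sqrt{n}\left(g\left(q_{n}\right)-g\left(q\right)\right),\qquad g\left(q\right)=d_{*}\left(v^{-1}\left(q\right),\mathcal{M}\right).
\]
We must check that $g$ is differentiable at $q$. By the explicit formulas in the introduction, $g(q)=l(h_{a}(v^{-1}(q)),0)$ or $g(q)=l(h_{r}(v^{-1}(q)),1)$.
\begin{itemize}
\item The map $h_{a}$ is a polynomial in the entries of $p$, because the marginals $r_{i}$ and $c_{j}$ are linear in $p$.
\item The map $h_{r}$ is a rational function. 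It is smooth wherever all $r_{i}c_{j}>0$, which I will assume; it is implicit in the definition of $d_{r}$.
\end{itemize}
Since $l$ is assumed differentiable, the chain rule makes $g$ differentiable at $q$ with derivative $\mathring{d}_{*}(q)$. The first-order delta method then yields
\[
T_{*}\left(p_{n}\right)\xrightarrow{d}N\left(0,\mathring{d}_{*}\left(q\right)\Sigma\left(q\right)\mathring{d}_{*}\left(q\right)^{t}\right).
\]

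The main obstacle is the differentiability of $l(\cdot,0)$, respectively $l(\cdot,1)$, at the point $h_{*}(p)$. Typical choices such as the Euclidean distance are not differentiable where their two arguments coincide. However, at a boundary point we have $l(h_{*}(p),\cdot)=\varepsilon>0$, so $h_{a}(p)\neq0$ and $h_{r}(p)\neq1$. For Euclidean-type $l$ the function is therefore smooth in a neighbourhood of $h_{*}(p)$, and the stated hypothesis that $l$ is differentiable covers exactly this. I would also note that if $\mathring{d}_{*}(q)\Sigma(q)\mathring{d}_{*}(q)^{t}=0$ the limit is a point mass at $0$, which is still a (degenerate) Gaussian, so the statement holds in that case too.
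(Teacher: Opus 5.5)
Your proposal is correct and follows the same route as the paper: the multinomial central limit theorem for $\sqrt{n}(q_n-q)$ with covariance $\Sigma(q)=D_q-qq^t$, followed by the delta method applied to $q\mapsto d_*(v^{-1}(q),\mathcal{M})$. Your additional checks fill in hypotheses that the paper's two-line proof leaves implicit: that $d_*(p,\mathcal{M})=\varepsilon$ at a boundary point, so $T_*(p_n)$ is correctly centered; that $l$ need only be differentiable away from the diagonal, since $\varepsilon>0$; and that $r_ic_j>0$ is required for $d_r$.
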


\begin{proof}
Let $q_{n}=v\left(p_{n}\right)$. The normalized vector $\sqrt{n}\left(q_{n}-q\right)$
converges weakly to a random variable, which is Gaussian with mean
zero and covariance matrix $\Sigma\left(q\right)$, see \cite{Bishop1975},
Theorem 14.3-4 for details. The assertion follows by the delta method,
see \cite{Vaart1998}, p. 26, Theorem 3.1. 
\end{proof}
The asymptotic variance $\sigma_{*}\left(p\right)$ is unknown and
can be estimated by $\sigma_{*}\left(p_{n}\right)$. The estimator
$\sigma_{*}\left(p_{n}\right)$ is consistent by the continuous mapping
theorem because $p\mapsto\sigma_{*}\left(p\right)$ is a continuous
function. Let $l_{\alpha}$ denote the lower $\alpha$-quantile of
the normal distribution. Then the critical value of the asymptotic
test is $l_{\alpha}\sigma_{*}\left(p_{n}\right)$. Now we have all
components of the asymptotic test, which can be carried out as follows: 
\begin{enumerate}
\item Given are the contingency table $p_{n}$ of relative frequencies,
the tolerance parameter $\varepsilon$ and the significance level
$\alpha$. 
\item Calculate the test statistic $T_{*}\left(p_{n}\right)=\sqrt{n}\left(d_{*}\left(p_{n},\mathcal{M}\right)-\varepsilon\right)$. 
\item Calculate the asymptotic variance $\sigma_{*}\left(p_{n}\right)$. 
\item Reject $H_{0}$ if $T_{*}\left(p_{n}\right)\leq l_{\alpha}\sigma_{*}\left(p_{n}\right)$. 
\end{enumerate}
The outlined test is locally asymptotically most powerful, see \cite{Ostrovski2016},
Proposition 3. 
\begin{rem}
The minimum tolerance parameter $\varepsilon$, for which the asymptotic
test rejects $H_{0}$, can be calculated as $d_{*}\left(p_{n},\mathcal{M}\right)-\frac{1}{\sqrt{n}}l_{\alpha}\sigma_{*}\left(p_{n}\right)$. 
\end{rem}

\begin{rem}
The asymptotic test can be straightforward generalized for the multi-way
contingency tables. 
\end{rem}

\section{Bootstrap Tests}\label{sec_bootstrap}

The parametric bootstrap is an efficient method to improve the finite
sample performance of the proposed tests. Let $\partial H_{0}$ denote
the boundary of $H_{0}$. Let $\tilde{p}_{n}$ denote an estimator
of $p$, which fulfills the condition $\tilde{p}_{n}\in\partial H_{0}$.
The critical value $c\left(\alpha,p\right)$ can be estimated by 
\[
c\left(\alpha,\tilde{p}_{n}\right)=\sup_{x\in\mathbb{R}}\left\{ x:P\left(T_{*}\left(\tilde{p}_{n}\right)\leq x\right)\leq\alpha\right\} 
\]
because the critical value should be estimated so as if $H_{0}$ were
true. The estimator $c\left(\alpha,\tilde{p}_{n}\right)$ can be computed
by the Monte Carlo method to any degree of accuracy.

The minimum distance estimator of $p$ would be difficult to compute
because the boundary $\partial H_{0}$ cannot be parameterized to
apply common optimization techniques. Therefore, we propose a computationally
feasible estimator of $p$, which is based on the randomized approximation
to the minimum distance estimator.

Let $q$ be some probability matrix such that $d_{*}\left(q,\mathcal{M}\right)>\varepsilon$.
If $d_{*}\left(p_{n},\mathcal{M}\right)\leq\varepsilon$, then let
$a_{n}$ be the largest number from $\left[0,1\right]$ such that
\[
d_{*}\left(a_{n}p_{n}+\left(1-a_{n}\right)q,\mathcal{M}\right)=\varepsilon.
\]
Otherwise let $a_{n}=1$. The linear combination 
\[
c\left(p_{n},q\right)=a_{n}p_{n}+\left(1-a_{n}\right)q
\]
 is a consistent estimator of the boundary point $p$ under additional
requirements as shown below.
\begin{prop}
Assume that $d_{*}\left(ap+\left(1-a\right)q,\mathcal{M}\right)>\varepsilon$
for all $a\in\left[0,1\right)$. Then $c\left(p_{n},q\right)\rightarrow p$
a.e. for $n\rightarrow\infty$.\label{consist_bnd_est} 
\end{prop}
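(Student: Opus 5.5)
Since $c(p_n,q)=a_np_n+(1-a_n)q$ and $p_n\to p$ almost surely by the strong law of large numbers, it suffices to show $a_n\to 1$ almost surely. Indeed, once that holds,
\[
\|c(p_n,q)-p\|\le a_n\|p_n-p\|+(1-a_n)\|q-p\|\to0 .
\]
Throughout I work on the event of probability one on which $p_n\to p$. On this event $r_i(p_n)\to r_i(p)$ and $c_j(p_n)\to c_j(p)$. I assume, as is implicit for $d_r$ to be defined, that the marginals of $p$ are positive. Then the maps $h_a,h_r$, and hence $x\mapsto d_*(x,\mathcal{M})$, are continuous in a neighbourhood of $p$ and also along the segments between $q$ and points near $p$.

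The key step is a compactness argument. Fix $\delta\in(0,1)$. The function $g(a,x)=d_*(ax+(1-a)q,\mathcal{M})$ is jointly continuous in $(a,x)$ on $[0,1-\delta]\times U$, where $U$ is a small closed neighbourhood of $p$ on which the marginals stay bounded away from zero. One has to check here that the marginals of $ax+(1-a)q$ are convex combinations of those of $x$ and $q$; these are positive provided $q$ also has positive marginals, which I would assume or note as needed. By hypothesis $g(a,p)>\varepsilon$ for every $a\in[0,1-\delta]$. By compactness of $[0,1-\delta]$, $\min_{a\le1-\delta}g(a,p)=\varepsilon+\eta$ for some $\eta>0$. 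Uniform continuity of $g$ on the compact set $[0,1-\delta]\times U$ then gives $g(a,p_n)>\varepsilon$ for all $a\in[0,1-\delta]$ once $n$ is large.

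Now I handle the two cases in the definition of $a_n$. If $d_*(p_n,\mathcal{M})>\varepsilon$, then $a_n=1$ and there is nothing to show. If $d_*(p_n,\mathcal{M})\le\varepsilon$, then $a\mapsto g(a,p_n)$ is continuous on $[0,1]$, exceeds $\varepsilon$ at $a=0$ and is at most $\varepsilon$ at $a=1$. By the intermediate value theorem the set of roots of $g(\cdot,p_n)=\varepsilon$ is nonempty and closed, so the largest root $a_n$ exists. By the previous paragraph every root lies in $(1-\delta,1]$ for large $n$, so $a_n>1-\delta$. Since $\delta$ was arbitrary, $a_n\to1$ almost surely, which proves the claim.

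The main obstacle is the uniform step: the hypothesis controls only the segment from $p$, and we must transfer it to segments from $p_n$. This is why I restrict to $a\le1-\delta$ and use compactness, since near $a=1$ no strict inequality is available because $g(1,p)=\varepsilon$ when $p\in\partial H_0$. A secondary technical point is the continuity of $d_r$, which requires the marginals along the segment to stay away from zero.
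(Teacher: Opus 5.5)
Your proof is correct and is essentially the paper's argument: both reduce the claim to $a_n\to1$ on the event $\{p_n\to p\}$, and both use compactness together with continuity of $(a,x)\mapsto d_*(ax+(1-a)q,\mathcal{M})$ to rule out roots bounded away from $a=1$. The paper argues by subsequences (any limit point $a_0$ of $a_n$ must satisfy $d_*(a_0p+(1-a_0)q,\mathcal{M})=\varepsilon$, which forces $a_0=1$), while your uniform margin on $[0,1-\delta]$ also spells out the case $a_n=1$ and the positivity of marginals needed for $d_r$ to be continuous, both of which the paper leaves implicit.
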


\begin{proof}
We show that $a_{n}\rightarrow1$ for $n\rightarrow\infty$. Let $d_{*}\left(p_{n},\mathcal{M}\right)<\varepsilon$
because $a_{n}=1$ otherwise. The function $f:a\mapsto d_{*}\left(ap_{n}+\left(1-a\right)q,\mathcal{M}\right)$
is continuous on $\left[0,1\right]$ and $f\left(0\right)=d_{*}\left(q,\mathcal{M}\right)>\varepsilon$
as well as $f\left(1\right)=d_{*}\left(p_{n},\mathcal{M}\right)\leq\varepsilon$.
Therefore, there exists a largest number $a_{n}\in\left[0,1\right]$
such that $f\left(a_{n}\right)=\varepsilon$. It is worth mentioning
that $a_{n}$ is a function of $p_{n}$.

Let $E=\left\{ \lim_{n\rightarrow\infty}p_{n}=p\right\} $. By the
strong law of large numbers, $p_{n}$ converges to $p$ a.e. for $n\rightarrow\infty$
and therefore $P(E)=1$. Let $\omega\in E$ be an arbitrary point
and let $a_{n}\left(\omega\right)$ denote $a_{n}\left(f_{n}\left(\omega\right)\right)$.
The sequence of $a_{n}\left(\omega\right)$ is bounded. Hence, there
exists a convergent sub-sequence $a_{n_{j}}\left(\omega\right)\rightarrow a_{0}\left(\omega\right)$
for $j\rightarrow\infty$. We obtain 
\[
a_{n_{j}}\left(\omega\right)p_{n_{j}}\left(\omega\right)+\left(1-a_{n_{j}}\left(\omega\right)\right)q\rightarrow a_{0}\left(\omega\right)p+\left(1-a_{0}\left(\omega\right)\right)q
\]
for $j\rightarrow\infty$ and consequently 
\[
d_{*}\left(a_{0}\left(\omega\right)p+\left(1-a_{0}\left(\omega\right)\right)q,\mathcal{M}\right)=\varepsilon.
\]
We conclude $a_{0}\left(\omega\right)=1$ due to the assumption $d_{*}\left(ap+\left(1-a\right)q,\mathcal{M}\right)>\varepsilon$
for all $a\in\left[0,1\right)$. Overall, we have shown that $a_{n}\left(\omega\right)\rightarrow1$
for all $\omega\in E$. 
\end{proof}
Let $Q=\left\{ q_{1},\ldots,q_{m}\right\} $ be a finite set of probability
matrices, such that any $q_{i}$ fulfills $d_{*}\left(q_{i},\mathcal{M}\right)>\varepsilon$.
We define the estimator $\tilde{p}_{n}$ as a minimum distance estimator
among the linear combinations $c\left(p_{n},q\right)$ for all $q\in Q$.
Formally, the estimator $\tilde{p}_{n}$ equals $c\left(p_{n},q_{i}\right)$,
which fulfills the conditions $q_{i}\in Q$ and $l\left(c\left(p_{n},q_{i}\right),p_{n}\right)=\min_{q\in Q}l\left(c\left(p_{n},q\right),p_{n}\right)$.
Note that the distance $l$ is used to define the estimator $\tilde{p}_{n}$
because $d_{*}$ is a pseudo-metric only. 
\begin{cor}
Let at least one $q\in Q$ satisfy $d_{*}\left(ap+\left(1-a\right)q,\mathcal{M}\right)>\varepsilon$
for all $a\in\left[0,1\right)$. Then $\tilde{p}_{n}\rightarrow p$
a.e. for $n\rightarrow\infty$.
\end{cor}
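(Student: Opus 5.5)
Fix a $q^{*}\in Q$ satisfying the hypothesis of the corollary. By Proposition \ref{consist_bnd_est}, $c\left(p_{n},q^{*}\right)\rightarrow p$ a.e. The plan is to combine this with the minimality in the definition of $\tilde{p}_{n}$ and the fact that $\tilde{p}_{n}$ lies on $\partial H_{0}$, and then argue pathwise on a set of probability one.

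\emph{Step 1 (probability-one event).} Work on the intersection of $E=\left\{ p_{n}\rightarrow p\right\}$ with the event on which $c\left(p_{n},q^{*}\right)\rightarrow p$. This intersection still has probability one.

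\emph{Step 2 (the distance to $p_{n}$ vanishes).} Continuity of $l$ gives $l\left(c\left(p_{n},q^{*}\right),p_{n}\right)\rightarrow l\left(p,p\right)=0$. By the minimality defining $\tilde{p}_{n}$,
\[
0\leq l\left(\tilde{p}_{n},p_{n}\right)\leq l\left(c\left(p_{n},q^{*}\right),p_{n}\right)\rightarrow0 .
\]

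\emph{Step 3 (pass to $p$).} Since $l$ is a genuine distance, the triangle inequality yields
\[
l\left(\tilde{p}_{n},p\right)\leq l\left(\tilde{p}_{n},p_{n}\right)+l\left(p_{n},p\right)\rightarrow0 .
\]
On the finite-dimensional space all metrics inducing the usual topology agree on convergence, so $\tilde{p}_{n}\rightarrow p$.

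The main obstacle is that $l$ is only assumed to be a ``differentiable distance''. If $l$ is not known to induce the Euclidean topology, $l\left(\tilde{p}_{n},p\right)\rightarrow0$ need not mean coordinatewise convergence. I would handle this with a compactness argument. The sequence $\tilde{p}_{n}$ lies in the compact simplex of probability matrices, so every subsequence has a further convergent subsequence with some limit $p'$. Continuity of $l$ then gives $l\left(p',p\right)=0$, hence $p'=p$. Therefore the whole sequence converges to $p$.

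Note that the remaining $q\in Q$ need not satisfy the hypothesis. Their candidates $c\left(p_{n},q\right)$ may converge elsewhere, but this does no harm: the estimator $\tilde{p}_{n}$ is selected by minimal $l$-distance to $p_{n}$, and Step 2 controls exactly that quantity.
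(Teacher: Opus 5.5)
Your proof is correct and follows the paper's argument. Minimality of $\tilde{p}_{n}$ gives $l(\tilde{p}_{n},p_{n})\leq l(c(p_{n},q),p_{n})$, and Proposition~\ref{consist_bnd_est} together with the strong law of large numbers sends the right-hand side to zero; the paper does this via the triangle inequality, you via continuity of $l$. You also spell out the final step $l(\tilde{p}_{n},p)\leq l(\tilde{p}_{n},p_{n})+l(p_{n},p)\rightarrow0$ and the compactness argument turning $l$-convergence into ordinary convergence, both of which the paper leaves implicit. One small correction: drop your opening remark that $\tilde{p}_{n}\in\partial H_{0}$, since it is unused and not always true (when $d_{*}(p_{n},\mathcal{M})>\varepsilon$ one has $\tilde{p}_{n}=p_{n}$).
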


\begin{proof}
By definition of $\tilde{p}_{n}$, we obtain 
\[
l\left(\tilde{p}_{n},p_{n}\right)\leq l\left(c\left(p_{n},q\right),p_{n}\right)\leq l\left(c\left(p_{n},q\right),p\right)+l\left(p,p_{n}\right),
\]
where $l\left(c\left(p_{n},q\right),p\right)\rightarrow0$ a.e. by
Proposition \ref{consist_bnd_est} and $l\left(p,p_{n}\right)\rightarrow0$
a.e. by the strong law of large numbers. 
\end{proof}
The bootstrap test can be carried out as follows: 
\begin{enumerate}
\item Given are the contingency table $p_{n}$ of relative frequencies,
the tolerance parameter $\varepsilon$, the number of exterior points
$m$ and the significance level $\alpha$. 
\item Calculate the test statistic $T_{*}\left(p_{n}\right)=\sqrt{n}\left(d_{*}\left(p_{n},\mathcal{M}\right)-\varepsilon\right)$. 
\item If $T_{*}\left(p_{n}\right)\geq0$ then set $\tilde{p}_{n}=p_{n}$
and go to step 7. 
\item Find $m$ different points $q\in H_{0}$ such that $d_{*}\left(q,\mathcal{M}\right)>\varepsilon$.
The following rejection algorithm can be applied for the search: 
\begin{enumerate}
\item Simulate a random matrix $w$ whose entries are independently uniformly
distributed on $\left[0,1\right]$. 
\item Normalize $w$ to a probability matrix $q$. 
\item Add $q$ to $Q$ if $d_{*}\left(q,\mathcal{M}\right)>\varepsilon$
or reject $q$ otherwise. 
\item Repeat previous steps until all exterior points are found. 
\end{enumerate}
\item Solve the equation $d_{*}\left(a_{n}p_{n}+(1-a_{n})q,\mathcal{M}\right)=\varepsilon$
for $a_{n}$ using some root finding method. Repeat for all $q\in Q$. 
\item Find the minimum distance estimator $\tilde{p}_{n}$ among all linear
combinations $c\left(p_{n},q\right)$, where $q\in Q$. 
\item Estimate the critical value $c\left(\alpha,\tilde{p}_{n}\right)$
using Monte Carlo simulation. 
\item Reject $H_{0}$ if $T_{*}\left(p_{n}\right)\leq c\left(\alpha,\tilde{p}_{n}\right)$. 
\end{enumerate}
\begin{rem}
The bootstrap test is asymptotically consistent, see \cite{Lehmann2008},
Theorem 15.6.1. Consequently, the test is also locally asymptotically
most powerful see \cite{Ostrovski2016}, Proposition 3. 
\end{rem}

\begin{rem}
The appropriate number of exterior points $m$ can be found empirically.
We found that $m=\left(k_{1}+k_{2}\right)*50$ is sufficient and scales
well with the table size. 
\end{rem}

\begin{rem}
The minimum tolerance parameter $\varepsilon$, for which the bootstrap
test rejects $H_{0}$, can be found numerically. For this purpose,
the equation $T_{*}\left(p_{n}\right)=c\left(\alpha,\tilde{p}_{n}\right)$
should be solved for the tolerance parameter $\varepsilon$ using
some root finding algorithm. The exterior points and bootstrap samples
should remain unchanged during optimization. 
\end{rem}

\section{Simulation Study of Finite Sample Performance}

We study the finite sample performance of the proposed tests by the
Monte Carlo simulation for different sample sizes and table sizes.
The tests are implemented in R and available online, see \href{https://github.com/TestingEquivalence/EquivalenceTestIndependenceR}{https://github.com/TestingEquivalence/EquivalenceTestIndependenceR}. 

The distance $l$ is scaled Euclidean distance $l_{2}$, where the
scale factor is necessary to obtain comparable test results for different
table sizes. We use $l=\frac{1}{\sqrt{k_{1}k_{2}}}l_{2}$ in case
of $d_{r}$ and $l=\sqrt{k_{1}k_{2}}l_{2}$ in case of $d_{a}$. Alternatively
the smoothed total variation distance would be a good choice, see
\cite{Ostrovski2016}.

The minimum $\varepsilon$, for which the test power equals $0.9$,
is calculated for different table sizes and sample sizes at the uniform
probability matrices for the purpose of throwing some light on the
appropriate values of $\varepsilon$ and the effective sample sizes.
Table \ref{tab_pow} shows the minimum $\varepsilon$ for the distance
$d_{r}$. The minimum $\varepsilon$ for $d_{a}$ can be found in
Table \ref{tab_pow_abs} because the results are very similar for
$d_{a}$ and $d_{r}$. The minimum $\varepsilon$ decreases with the
increasing sample size at the rate $n^{-\frac{1}{2}}$. The minimum
parameter $\varepsilon$ climbs with the increasing table size at
the rate $k_{1}+k_{2}$. Thus, the test power falls slowly with the
increasing table size. The bootstrap tests have a smaller minimum
$\varepsilon$ than the asymptotic tests and the difference increases
considerably with the table size.

We study the type I error rates at 100 randomly selected points from
$\partial H_{0}$ because the boundary of $H_{0}$ is a very complex
set and it is difficult to identify particularly interesting boundary
points. The points are found using steps 4 and 5 of the algorithm
at the end of Section \ref{sec_bootstrap}. The sample size $n$ equals
$100*\left(k_{1}+k_{2}\right)$ to maintain similar test power for
different table sizes because test power falls with increasing table
size. The simulation results are summarized in Table \ref{tab_bnd}.
The power of all tests varies considerably from point to point. The
averaged power of the asymptotic tests decreases quickly with the
table size. The asymptotic tests are not conservative for the small
tables and become very conservative for the larger tables. The averaged
power of the bootstrap tests is very close to the nominal level for
all table sizes. However, the bootstrap tests are not conservative
for all table sizes. Particularly, the $d_{r}$ based bootstrap test
shows strong anti-conservative tendency.

A detailed analysis of the boundary points shows that the test power
is far above the nominal level at the points, where $r_{i}c_{j}$
is close to zero for some $i$ and $j$. Therefore, the test results
should be treated with caution, if the marginal probability is close
to zero for at least one category.

The conservative tests can be obtained by shrinking the tolerance
parameter $\varepsilon$. Table \ref{tab_bnd_shrinkage} summarizes
the simulation results for $\varepsilon=0.18$, where the test power
is calculated at the same points as in Table \ref{tab_bnd}, i.e.
$d_{*}\left(p,\mathcal{M}\right)=0.2$ at all considered points. Then
the $d_{a}$ based tests are conservative at all points and the $d_{r}$
based tests are still non conservative at some points.

The type II error rates are studied at 100 randomly selected product
measures for each table size, see Table \ref{tab_prod}. It should
be noted that Table \ref{tab_prod} contains test power and the type
II error rate equals 1 minus test power. The sample size equals $100*\left(k_{1}+k_{2}\right)$
to be comparable to the type I error analysis. The power of $d_{r}$-based
tests changes very strongly from point to point. Given the fixed table
size, the power of the $d_{a}$-based tests is almost constant at
all considered points. The averaged power of the asymptotic tests
decreases slightly with the increasing table size. The averaged power
of the bootstrap tests does not change with the table size.

\begin{table}[H]
\centering %
\begin{tabular}{|c|ccccccc|}
\multicolumn{8}{c}{Asymptotic test}\tabularnewline
\hline 
table size  & 100  & 200  & 500  & 1000  & 2000  & 5000  & 10000\tabularnewline
\hline 
2$\times$4  & 0.390  & 0.272  & 0.171  & 0.120  & 0.085  & 0.054  & 0.038\tabularnewline
3$\times$3  & 0.418  & 0.295  & 0.185  & 0.130  & 0.092  & 0.058  & 0.041\tabularnewline
3$\times$4  & 0.474  & 0.331  & 0.208  & 0.146  & 0.104  & 0.066  & 0.046\tabularnewline
4$\times$4  & 0.540  & 0.375  & 0.236  & 0.166  & 0.117  & 0.074  & 0.052\tabularnewline
4$\times$5  & 0.593  & 0.413  & 0.258  & 0.181  & 0.128  & 0.081  & 0.057\tabularnewline
5$\times$5  & 0.655  & 0.453  & 0.283  & 0.200  & 0.141  & 0.089  & 0.063\tabularnewline
\hline 
\multicolumn{8}{c}{Bootstrap test}\tabularnewline
\hline 
2$\times$4  & 0.382  & 0.271  & 0.171  & 0.121  & 0.085  & 0.054  & 0.038\tabularnewline
3$\times$3  & 0.400  & 0.284  & 0.179  & 0.126  & 0.089  & 0.056  & 0.040\tabularnewline
3$\times$4  & 0.431  & 0.302  & 0.190  & 0.134  & 0.095  & 0.060  & 0.042\tabularnewline
4$\times$4  & 0.464  & 0.322  & 0.203  & 0.143  & 0.101  & 0.064  & 0.045\tabularnewline
4$\times$5  & 0.488  & 0.339  & 0.212  & 0.149  & 0.106  & 0.067  & 0.047\tabularnewline
5$\times$5  & 0.519  & 0.356  & 0.223  & 0.158  & 0.111  & 0.071  & 0.050\tabularnewline
\hline 
\end{tabular}\caption{Minimum tolerance parameter $\varepsilon$, for which the test power
equals $0.9$ at nominal level $\alpha=0.05$, is calculated at the
uniform probability matrices using the distance $d_{r}$. The sample
size is $100,\ldots,10000$.}\label{tab_pow}
\end{table}

\begin{table}[H]
\centering %
\begin{tabular}{|c|cccccc|cccccc|}
\multicolumn{1}{c}{} & \multicolumn{6}{c}{Asymptotic test based on $d_{r}$} & \multicolumn{6}{c}{Asymptotic test based on $d_{a}$}\tabularnewline
\hline 
 & 2$\times$4  & 3$\times$3  & 3$\times$4  & 4$\times$4  & 4$\times$5  & 5$\times$5  & 2$\times$4  & 3$\times$3  & 3$\times$4  & 4$\times$4  & 4$\times$5  & 5$\times$5\tabularnewline
\hline 
Minimum  & 0.02  & 0.01  & 0.01  & 0.00  & 0.00  & 0.00  & 0.04  & 0.03  & 0.02  & 0.01  & 0.00  & 0.00\tabularnewline
Maximum  & 0.12  & 0.10  & 0.06  & 0.05  & 0.03  & 0.01  & 0.08  & 0.08  & 0.06  & 0.03  & 0.02  & 0.01\tabularnewline
Average  & 0.05  & 0.04  & 0.02  & 0.01  & 0.01  & 0.00  & 0.06  & 0.05  & 0.03  & 0.02  & 0.01  & 0.00\tabularnewline
Deviation  & 0.02  & 0.01  & 0.01  & 0.01  & 0.00  & 0.00  & 0.01  & 0.01  & 0.01  & 0.00  & 0.00  & 0.00\tabularnewline
\hline 
\multicolumn{1}{c}{} & \multicolumn{6}{c}{Bootstrap test based on $d_{r}$} & \multicolumn{6}{c}{Bootstrap test based on $d_{a}$}\tabularnewline
\hline 
Minimum  & 0.00  & 0.00  & 0.01  & 0.03  & 0.03  & 0.04  & 0.04  & 0.03  & 0.04  & 0.04  & 0.04  & 0.04\tabularnewline
Maximum  & 0.07  & 0.08  & 0.10  & 0.13  & 0.10  & 0.11  & 0.09  & 0.08  & 0.07  & 0.07  & 0.07  & 0.06\tabularnewline
Average  & 0.04  & 0.05  & 0.05  & 0.05  & 0.05  & 0.06  & 0.05  & 0.05  & 0.05  & 0.05  & 0.05  & 0.05\tabularnewline
Deviation  & 0.01  & 0.01  & 0.01  & 0.01  & 0.01  & 0.01  & 0.01  & 0.01  & 0.01  & 0.00  & 0.00  & 0.00\tabularnewline
\hline 
\end{tabular}\caption{Summary of the simulated exact rejection probability of the equivalence
tests at nominal level $\alpha=0.05$ and tolerance parameter $\varepsilon=0.2$.
The rejection probability is simulated at 100 randomly selected boundary
points. The sample size is $\left(k_{1}+k_{2}\right)*100$ and the
number of replications is 10.000 for each experiment.}\label{tab_bnd}
\end{table}

\begin{table}[H]
\centering %
\begin{tabular}{|c|cccccc|cccccc|}
\multicolumn{1}{c}{} & \multicolumn{6}{c}{Asymptotic test based on $d_{r}$ } & \multicolumn{6}{c}{Asymptotic test based on $d_{a}$}\tabularnewline
\hline 
 & 2$\times$4  & 3$\times$3  & 3$\times$4  & 4$\times$4  & 4$\times$5  & 5$\times$5  & 2$\times$4  & 3$\times$3  & 3$\times$4  & 4$\times$4  & 4$\times$5  & 5$\times$5\tabularnewline
\hline 
Minimum  & 0.20  & 0.09  & 0.13  & 0.10  & 0.16  & 0.26  & 0.99  & 0.98  & 0.96  & 0.94  & 0.93  & 0.90\tabularnewline
Maximum  & 0.99  & 0.98  & 0.97  & 0.95  & 0.92  & 0.87  & 1.00  & 1.00  & 0.99  & 0.97  & 0.96  & 0.92\tabularnewline
Average  & 0.84  & 0.80  & 0.80  & 0.77  & 0.70  & 0.64  & 0.99  & 0.99  & 0.98  & 0.96  & 0.95  & 0.91\tabularnewline
Deviation  & 0.17  & 0.21  & 0.18  & 0.16  & 0.18  & 0.15  & 0.00  & 0.00  & 0.00  & 0.00  & 0.01  & 0.00\tabularnewline
\hline 
\multicolumn{1}{c}{} & \multicolumn{6}{c}{Bootstrap test based on $d_{r}$} & \multicolumn{6}{c}{Bootstrap test based on $d_{a}$}\tabularnewline
\hline 
Minimum  & 0.13  & 0.13  & 0.22  & 0.28  & 0.56  & 0.73  & 0.98  & 0.97  & 0.97  & 0.98  & 0.98  & 0.99\tabularnewline
Maximum  & 0.99  & 0.99  & 0.99  & 0.99  & 0.99  & 0.99  & 1.00  & 1.00  & 0.99  & 0.99  & 1.00  & 1.00\tabularnewline
Average  & 0.82  & 0.80  & 0.86  & 0.89  & 0.90  & 0.94  & 0.99  & 0.99  & 0.99  & 0.99  & 0.99  & 0.99\tabularnewline
Deviation  & 0.20  & 0.22  & 0.17  & 0.11  & 0.10  & 0.06  & 0.00  & 0.01  & 0.00  & 0.00  & 0.00  & 0.00\tabularnewline
\hline 
\end{tabular}\caption{Summary of the simulated test power at nominal level $\alpha=0.05$
and tolerance parameter $\varepsilon=0.2$. The rejection probability
is simulated at 100 randomly selected product measures. The sample
size is $\left(k_{1}+k_{2}\right)*100$ and the number of replications
is 10.000 for each experiment.}\label{tab_prod}
\end{table}

\section{Real data sets}

To demonstrate the application of the proposed tests, three examples
with real data sets are considered: gender and nitrendipine therapy
(Nitrendipine); eye color and hair color (Color); children number
and income (Children). The corresponding two way contingency tables
are given in Appendix, Tables \ref{tab:nitrendipine}, \ref{tab:color}
and \ref{tab:children}. Table \ref{tab_rw} displays the minimum
tolerance parameter $\varepsilon$, for which $H_{0}$ can be rejected
at the nominal level $\alpha=0.05$. The three examples are also used
in \cite{Ostrovski2018}, such that a direct comparison is possible.
The results for distance $d_{a}$ are similar to those presented in
\cite{Ostrovski2018} after appropriate re-scaling. However, we avoid
the unproven assumptions and the extensive use of the numeric optimization,
which are necessary in \cite{Ostrovski2018}.

The first example concerns with the question if the treatment outcome
on nitrendipine mono-therapy in patients suffering from mild arterial
hypertension depends on gender. The data set is also an example for
approximate independence in \cite{Wellek2010}. The asymptotic and
bootstrap test results for $d_{r}$ are very close to each other.
The results for $d_{a}$ differ considerably for the asymptotic and
bootstrap test. Given the small sample size, the treatment outcome
and gender can be considered approximately independent.

A common example for independence testing is the cross-classification
of eye color and hair color, see \cite{DiaconisEfron1985}, \cite{LiuLindsay2009}.
The test results in Table \ref{tab_rw} reflect the well known fact
that eye color and hair color are not independently distributed. All
tests behave very similarly and can reject $H_{0}$ only for very
large values of $\varepsilon$.

The cross-classification of the number of children by the annual income
has a large sample size. However, the category, where the number of
children is larger than or equal $4$, is sparsely populated. Therefore,
the $d_{r}$ based tests can reject $H_{0}$ only for comparatively
large values of $\varepsilon$ and the test power is low. The $d_{a}$
based tests show that the number of children and annual income may
be considered approximately independent, but the approximation is
very inaccurate.

\begin{table}[H]
\centering %
\begin{tabular}{|c|c|cc|cc|}
\hline 
\multirow{1}{*}{Data set} & \multirow{1}{*}{$n$} & \multicolumn{2}{c|}{Tests based on $d_{r}$} & \multicolumn{2}{c|}{Tests based on $d_{a}$}\tabularnewline
 &  & Asymptotic  & Bootstrap  & Asymptotic  & Bootstrap\tabularnewline
\hline 
Nitrendipine  & 217  & 0.228  & 0.214  & 0.147  & 0.141 \tabularnewline
Color  & 592  & 0.588  & 0.574  & 0.593  & 0.583 \tabularnewline
Children  & 25263  & 0.277  & 0.281  & 0.178  & 0.177\tabularnewline
\hline 
\end{tabular}

\caption{Minimum tolerance parameter $\varepsilon$, for which $H_{0}$ can
be rejected at the nominal level $\alpha=0.05$.}\label{tab_rw}
\end{table}

\begin{table}[H]
\centering %
\begin{tabular}{|c|ccccccc|}
\multicolumn{8}{c}{asymptotic test}\tabularnewline
\hline 
table size  & 100  & 200  & 500  & 1000  & 2000  & 5000  & 10000\tabularnewline
\hline 
2$\times$4  & 0.377  & 0.267  & 0.169  & 0.120  & 0.085  & 0.054  & 0.038\tabularnewline
3$\times$3  & 0.405  & 0.290  & 0.183  & 0.130  & 0.092  & 0.058  & 0.041\tabularnewline
3$\times$4  & 0.458  & 0.327  & 0.207  & 0.146  & 0.104  & 0.065  & 0.046\tabularnewline
4$\times$4  & 0.520  & 0.368  & 0.234  & 0.165  & 0.117  & 0.074  & 0.052\tabularnewline
4$\times$5  & 0.567  & 0.403  & 0.255  & 0.180  & 0.128  & 0.081  & 0.057\tabularnewline
5$\times$5  & 0.623  & 0.444  & 0.281  & 0.199  & 0.141  & 0.089  & 0.063\tabularnewline
\hline 
\multicolumn{8}{c}{bootstrap test}\tabularnewline
\hline 
2$\times$4  & 0.378  & 0.268  & 0.170  & 0.120  & 0.085  & 0.054  & 0.038\tabularnewline
3$\times$3  & 0.396  & 0.282  & 0.178  & 0.125  & 0.089  & 0.056  & 0.040\tabularnewline
3$\times$4  & 0.423  & 0.302  & 0.190  & 0.134  & 0.095  & 0.060  & 0.042\tabularnewline
4$\times$4  & 0.457  & 0.320  & 0.203  & 0.143  & 0.101  & 0.064  & 0.045\tabularnewline
4$\times$5  & 0.478  & 0.337  & 0.212  & 0.149  & 0.106  & 0.067  & 0.047\tabularnewline
5$\times$5  & 0.501  & 0.354  & 0.222  & 0.158  & 0.111  & 0.070  & 0.050\tabularnewline
\hline 
\end{tabular}\caption{Minimum tolerance parameter $\varepsilon$, for which the test power
equals $0.9$ at nominal level $\alpha=0.05$, is calculated at the
uniform probability matrices using the distance $d_{a}$. The sample
size is $100,\ldots,10000$.}\label{tab_pow_abs}
\end{table}

\begin{table}[H]
\centering %
\begin{tabular}{|c|cccc|}
\hline 
\multirow{1}{*}{gender} & \multicolumn{4}{c|}{outcome category}\tabularnewline
\hline 
 & 1  & 2  & 3  & 4\tabularnewline
\hline 
female  & 9  & 13  & 13  & 48\tabularnewline
male  & 24  & 18  & 20  & 72\tabularnewline
\hline 
\end{tabular}

\caption{Contingency table relating gender and treatment outcome on nitrendipine
mono-therapy in patients suffering from mild arterial hypertension. }\label{tab:nitrendipine}
\end{table}

\begin{table}[H]
\centering %
\begin{tabular}{|c|cccc|}
\hline 
\multirow{1}{*}{Eye color} & \multicolumn{4}{c|}{Hair color}\tabularnewline
\hline 
 & Black  & Brunette  & Red  & Blonde\tabularnewline
\hline 
Brown  & 68  & 119  & 26  & 7\tabularnewline
Blue  & 20  & 84  & 17  & 94\tabularnewline
Hazel  & 15  & 54  & 14  & 10\tabularnewline
Green  & 5  & 29  & 14  & 16\tabularnewline
\hline 
\end{tabular}

\caption{Cross-classification of eye color and hair color.}\label{tab:color}
\end{table}

\begin{table}[H]
\centering %
\begin{tabular}{|c|cccc|}
\hline 
\multirow{1}{*}{No. of children} & \multicolumn{4}{c|}{Annual income}\tabularnewline
\hline 
 & 0--1  & 1--2  & 2--3  & 3+\tabularnewline
\hline 
0  & 2161  & 3577  & 2184  & 1636\tabularnewline
1  & 2755  & 5081  & 2222  & 1052\tabularnewline
2  & 936  & 1753  & 640  & 306\tabularnewline
3  & 225  & 419  & 96  & 38\tabularnewline
4+  & 39  & 98  & 31  & 14\tabularnewline
\hline 
\end{tabular}\caption{Cross-classification of number of children by annual income.}\label{tab:children}
\end{table}

\begin{table}[H]
\centering %
\begin{tabular}{|c|cccccc|cccccc|}
\multicolumn{1}{c}{} & \multicolumn{6}{c}{Asymptotic test based on $d_{r}$} & \multicolumn{6}{c}{Asymptotic test based on $d_{a}$}\tabularnewline
\hline 
 & 2$\times$4  & 3$\times$3  & 3$\times$4  & 4$\times$4  & 4$\times$5  & 5$\times$5  & 2$\times$4  & 3$\times$3  & 3$\times$4  & 4$\times$4  & 4$\times$5  & 5$\times$5\tabularnewline
\hline 
Minimum  & 0.01  & 0.00  & 0.00  & 0.00  & 0.00  & 0.00  & 0.01  & 0.01  & 0.00  & 0.00  & 0.00  & 0.00\tabularnewline
Maximum  & 0.08  & 0.06  & 0.03  & 0.02  & 0.01  & 0.00  & 0.04  & 0.04  & 0.02  & 0.01  & 0.00  & 0.00\tabularnewline
Average  & 0.02  & 0.02  & 0.01  & 0.00  & 0.00  & 0.00  & 0.02  & 0.02  & 0.01  & 0.00  & 0.00  & 0.00\tabularnewline
Deviation  & 0.01  & 0.01  & 0.00  & 0.00  & 0.00  & 0.00  & 0.01  & 0.01  & 0.00  & 0.00  & 0.00  & 0.00\tabularnewline
\hline 
\multicolumn{1}{c}{} & \multicolumn{6}{c}{Bootstrap test based on $d_{r}$} & \multicolumn{6}{c}{Bootstrap test based on $d_{a}$}\tabularnewline
\hline 
Minimum  & 0.00  & 0.00  & 0.00  & 0.01  & 0.01  & 0.01  & 0.01  & 0.01  & 0.01  & 0.01  & 0.01  & 0.01\tabularnewline
Maximum  & 0.05  & 0.06  & 0.09  & 0.12  & 0.06  & 0.07  & 0.04  & 0.04  & 0.03  & 0.03  & 0.02  & 0.02\tabularnewline
Average  & 0.02  & 0.02  & 0.02  & 0.02  & 0.02  & 0.02  & 0.02  & 0.02  & 0.02  & 0.02  & 0.02  & 0.01\tabularnewline
Deviation  & 0.01  & 0.01  & 0.01  & 0.01  & 0.01  & 0.01  & 0.01  & 0.01  & 0.00  & 0.00  & 0.00  & 0.00\tabularnewline
\hline 
\end{tabular}\caption{Summary of the simulated exact rejection probability of the equivalence
tests at nominal level $\alpha=0.05$ and shrunk tolerance parameter
$\varepsilon=0.18$. The rejection probability is simulated at 100
randomly selected boundary points of $H_{0}=\left\{ d_{*}\left(p,\mathcal{M}\right)\protect\geq0.2\right\} $.
The sample size is $\left(k_{1}+k_{2}\right)*100$ and the number
of replications is 10.000 for each experiment.}\label{tab_bnd_shrinkage}
\end{table}

\bibliographystyle{plainnat}
\bibliography{VO_literature}

\end{document}